\newtheorem{thm}{Theorem}[section]
\newtheorem{lem}[thm]{Lemma}
\theoremstyle{definition}
\newtheorem{defn}[thm]{Definition}
\newtheorem{exm}[thm]{Example}
\newtheorem{rem}[thm]{Remark}
\theoremstyle{remark}
\begin{document}

\setcounter{page}{1}
\begin{center}
{\LARGE On Some Ternary LCD Codes}\\[.5cm]
{\large Nitin S. Darkunde\footnote{\scriptsize School of Mathematical Sciences, Swami Ramanand Teerth Marathwada University, Nanded, India, Email: darkundenitin@gmail.com}, 
Arunkumar R. Patil\footnote{\scriptsize Department of Mathematics, Shri Guru Gobind Singhji Institute of Engineering and Technology, Nanded, India, Email: arun.iitb@gmail.com}}
\end{center}
\markboth{Nitin Darkunde}{Ternary LCD codes}



\


\begin{abstract}
The main aim of this paper is to study $LCD$ codes. Linear code with complementary dual($LCD$) are those codes which have their intersection with their dual code as $\{0\}$. In this paper we will give rather alternative proof of Massey's theorem\cite{8}, which is one of the most important characterization of $LCD$ codes. Let $LCD[n,k]_3$ denote the maximum of possible values of $d$ among $[n,k,d]$ ternary $LCD$ codes. In \cite{4}, authors have given upper bound on $LCD[n,k]_2$ and extended this result for $LCD[n,k]_q$, for any $q$, where $q$ is some prime power. We will discuss cases when this bound is attained for $q=3$.
\end{abstract}

Keywords:Linear code; Dual of linear code; Generator matrix.


\section{Introduction}	

A linear code with complementary dual (or $LCD$ code) was first introduced
by Massey\cite{8} in 1964. Afterwards, $LCD$ codes were extensively studied and applied in different fields.
Recently, Dougherty
et al.\cite{4} gave a linear programming bound on the largest size of an $LCD$ code. In 2015, Carlet and Guilley \cite{1} have given different types of  constructions of
$LCD$ codes. Further in 2017, Galvez et al.\cite{4} gave bounds on $LCD$ codes in binary case.

Let $GF(q)$ be a finite field with $q$ elements\cite{6,9}, where $q=p^k$, for some prime $p$ and $k\in \mathbb{Z}_{+}$. By $(GF(q))^n$ , we mean a cartesian product of $GF(q)$ with itself $n$ number of times, which is a vector space of dimension $n$ over $GF(q)$. A $k-$dimensional vector subspace of $(GF(q))^n$ over $GF(q)$ is called as {\em  $[n,k]_q$-linear code}\cite{9}. For a linear code $C$, its (minimum) distance\cite{9} is denoted by $d= d(C)$ and defined as $\min\left\{d(x,y):x\neq y, x,y\in C\right\}$, where $d(x, y)$ is usual Hamming distance between two codewords in $C$.
  These values of $n, k, d$ are called as parameters of corresponding code. 
  A {\em generator matrix}\cite{9}  for a code
  $C$ is denoted by matrix $G$ whose row vectors form a basis for $C$, whereas a {\em parity check matrix}\cite{9}  $H$ for code $C$ is a matrix
  whose rows form a basis for  dual code $C^ \bot$.
  Also,  $v\in C \Longleftrightarrow vH^T=0$ and $v\in C^\perp\Longleftrightarrow vG^T=0$. A linear code of distance $d$ is $u$-{\em error-detecting}\cite{9}
  $\Longleftrightarrow$ $d\geq u+1$, whereas a code $C$ is $v$-{\em error-correcting}\cite{6,9} $\Longleftrightarrow$ $d\geq 2v+1$, where $u, v\in \mathbb{Z}_{+}$. Hence $t=\left \lfloor \frac{(d-1)}{2} \right \rfloor$, is the error correcting capability of a code. For practical purposes we should have linear codes with distance as large as possible.

\section{Preliminaries}
Here, we will see a brief introduction of $LCD$ codes.

\begin{defn} (\cite{4,8}). A linear code with complementary dual is a code $C$, for which we have $C \cap C^{\perp}=\{0\}.$
\end{defn}

\begin{exm}
$C=\{00,01\}\subseteq (GF(2))^2$.\\
\end{exm}
There are some linear codes which are not $LCD$. For example: $C=\{0000, 1010,0101, 1111\}\subseteq (GF(2))^4$ is not $LCD$ code, because for this code, we have $C^{\perp}=\{0000,1010,0101,1111\}$ and hence, their intersection is non trivial.

Note that, if $C$ is $LCD$ code, then so is $C^{\perp}$. Let us state an important Theorem given by Massey in \cite{8} and give its alternate proof, which is new to the best of our knowledge, as we haven't made any use of idea of orthogonal projector, which has been used by Massey.

\begin{thm}$($\cite{8}$).$ \label{Massey's Theorem}
Let $G$ be a generator matrix of a linear code over $GF(q)$. Then $G$ generates an $LCD$ code if and only if $GG^{T}$ is invertible matrix.\\
\end{thm}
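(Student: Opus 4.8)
The plan is to translate the condition $C \cap C^{\perp} = \{0\}$ into a statement about the left null space of the $k \times k$ matrix $GG^{T}$, and then to exploit the fact that the rows of $G$ are linearly independent. First I would record the two defining facts already available from the introduction: every codeword of $C$ is of the form $v = xG$ for some $x \in (GF(q))^{k}$, with $x$ uniquely determined because $G$ has full row rank $k$ (so that $xG = 0$ forces $x = 0$); and a vector $v$ lies in $C^{\perp}$ exactly when $vG^{T} = 0$. Combining these gives the key dictionary: $v \in C \cap C^{\perp}$ if and only if $v = xG$ for some $x$ satisfying $xGG^{T} = (xG)G^{T} = 0$.

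With this dictionary in hand, both implications follow quickly. For the ``if'' direction, assume $GG^{T}$ is invertible and take any $v \in C \cap C^{\perp}$; writing $v = xG$, we obtain $xGG^{T} = 0$, and invertibility forces $x = 0$, hence $v = 0$, so $C$ is $LCD$. For the ``only if'' direction, assume $C$ is $LCD$ and suppose $xGG^{T} = 0$ for some $x$; then $v = xG$ satisfies $vG^{T} = 0$, so $v \in C \cap C^{\perp} = \{0\}$, giving $xG = 0$ and therefore $x = 0$ by the full-rank property of $G$. Since the only solution of $xGG^{T} = 0$ is $x = 0$, the square matrix $GG^{T}$ has trivial left null space and is thus invertible.

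The step I expect to carry the real weight is the injectivity argument $xG = 0 \Rightarrow x = 0$, i.e. that recovering the coordinate vector $x$ from the codeword $v = xG$ is unambiguous; this is precisely where the hypothesis that $G$ is a \emph{generator} matrix (full row rank) enters, and it is what allows me to conclude $x = 0$ rather than merely $v = 0$ in the second direction. Everything else is a routine manipulation of the identity $v \in C^{\perp} \Leftrightarrow vG^{T} = 0$ recorded in the introduction. I would emphasize that, in contrast to Massey's original argument, this route never invokes the orthogonal projector onto $C$; the non-degeneracy of the standard bilinear form is used only implicitly, through the description of $C^{\perp}$ by means of $G^{T}$.
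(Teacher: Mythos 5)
Your proposal is correct and follows essentially the same route as the paper's own proof: translating membership in $C \cap C^{\perp}$ into the left null space of $GG^{T}$ via $v = xG$ and $vG^{T} = 0$, with no orthogonal projector. If anything, you are slightly more careful than the paper, which in the converse direction asserts $v = uG \neq 0$ without explicitly invoking the full row rank of $G$ --- the very injectivity step $xG = 0 \Rightarrow x = 0$ that you correctly identify as carrying the weight of the argument.
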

\begin{proof}
Suppose $det(GG^{T})\neq 0$. We need to prove that $C$ is an $LCD$ code. Suppose $C$ is not $LCD$ code. Therefore there exists a non zero vector $v\in C \cap C^{\perp}$. Hence, we get $v \in C$ and $v \in C^{\perp}$. Since $v \in C $, therefore $\exists \hspace{0.03in} u \neq 0$ in $(GF(q))^k$ such that $v=uG$, where $G$ is given to be a generator matrix for $C$. Next $v \in C^{\perp}$, as a result of which, we get that $vG^{T}=0$. Consequently, $uGG^{T}=0$. Call $GG^{T}$ as $A$. But by hypothesis $A\in GL(k,GF(q))$. Hence we get homogeneous system $uA=0$, post-multiplying both sides by $A^{-1}$, we get $u=0$ and therefore we have,  $v=0$, which is a contradiction to the hypothesis. Therefore, whenever $GG^{T}$ is invertible, then linear code generated by $G$ must be $LCD$ code.

Conversely, suppose $C$ is $LCD$ code. We need to prove that $det(GG^{T})\neq 0$. Suppose $det(GG^{T})=0$. Therefore $GG^{T}$ is a singular linear transformation, hence there exists non zero vector $u \in (GF(q))^k$ such that $uGG^{T}=0$. Let $v=uG$, which implies $v\neq 0$ and we get $vG^{T}=0$, hence $v \in C^{\perp}$. Now it remains to show that $v \in C$. Since we had taken $v$ to be a non zero vector in $(GF(q))^n$ such that $v=uG$, we get $v\in C$. Therefore $\exists \hspace{0.03in} v \neq 0$ in $C \cap C^{\perp}$. 
\end{proof}

\section{Elementary bounds}
In this section, we are only concerned with codes over ternary field. Dougherty et al.\cite{3} introduced a concept of $LCD[n,k]$ over binary fields. Recently Galvez et al.\cite{4} had given an upper bound on $LCD[n,k]$ in binary case and also given some exact values for $k=2$ and for any $n$. They also extended this result for arbitrary values of $q$. Here we will obtain exact values of $LCD[n,k]$ in ternary case. Determination of values of $LCD[n,k]$ is analogous to determination of $A_q(n,d)$, where in the former case we used to concentrate on $d$ and in a later case we used to concentrate on size of a code. Firstly, let us have some definitions.

\begin{defn} For fixed values of $n$ and $k$, we have 
\begin{enumerate}
\item $LCD[n,k]:=\text{max}\{d:$ there exists a binary $[n,k,d]$\hspace{0.03in} $LCD$ code$\}.$
\item $LCD[n,k]_3:=\text{max}\{d:$ there exists a ternary $[n,k,d]$\hspace{0.03in} $LCD$ code$\}.$
\end{enumerate}
\end{defn}
Now we state a remark , which was a consequence of Lemma 2 from \cite{4}.

\begin{rem}
$LCD[n,k]_q \leq \left \lfloor \frac{n.q^{k-1}}{q^{k}-1} \right \rfloor$, for $k \geq 1$.\\
As a consequence of it, for $q=3$ and $k=2$, we have $LCD[n,2]_3 \leq \left \lfloor \frac{3n}{8} \right \rfloor$. 

\end{rem}

Now based on bound given above, we can obtain exact values of $LCD[n,2]_3$.

\begin{thm}
Let $n\geq 2$. Then $LCD[n,2]_3 = \left \lfloor \frac{3n}{8} \right \rfloor$, for $n \equiv 3, 4 (\text{mod}\hspace{0.1in}9)$.
\end{thm}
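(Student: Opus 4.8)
The inequality $LCD[n,2]_3 \le \left\lfloor \frac{3n}{8}\right\rfloor$ is exactly the preceding Remark, so the entire content of the statement is the reverse inequality: for each $n \equiv 3,4 \pmod 9$ I must exhibit a ternary $[n,2]$ $LCD$ code whose minimum distance equals $\left\lfloor \frac{3n}{8}\right\rfloor$. The plan is to build a generator matrix $G$ column by column and then certify the two required properties separately --- the $LCD$ property through Massey's Theorem \ref{Massey's Theorem}, i.e.\ by checking that $GG^{T}$ is invertible over $GF(3)$, and the distance by a direct weight count. The difficulty is that these two demands pull against each other, so the construction must be chosen with both in view at once.

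First I would cut down the design space. Every nonzero column of a $2\times n$ matrix over $GF(3)$ lies on one of the four one-dimensional subspaces of $GF(3)^2$; write $m_1,m_2,m_3,m_4$ for the number of columns on each, so that $m_1+m_2+m_3+m_4=n$. Up to column permutations and rescalings (which alter neither the weight distribution of the code nor the invertibility of $GG^{T}$) the code is determined by the vector $(m_1,m_2,m_3,m_4)$. Two formulas then drive everything. A nonzero codeword vanishes exactly on the columns orthogonal to its direction, and since the orthogonality pairing permutes the four subspaces, one obtains $d = n - \max_i m_i$. A direct evaluation of the $2\times 2$ Gram matrix gives the clean congruence $\det(GG^{T}) \equiv \sum_{1\le i<j\le 4} m_i m_j \pmod 3$, which depends only on the residues $m_i \bmod 3$; equivalently, writing $s$ for the number of indices with $3\nmid m_i$, one has $\det(GG^{T}) \equiv s - n^2 \pmod 3$, so $LCD$-ness is governed purely by how many of the $m_i$ are nonzero modulo $3$.

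With these in hand the construction becomes a bookkeeping problem: choose $(m_1,m_2,m_3,m_4)$ summing to $n$ with $\max_i m_i = n - \left\lfloor \frac{3n}{8}\right\rfloor$ and $\sum_{i<j} m_i m_j \not\equiv 0 \pmod 3$. I would write $n = 9t+3$ and $n = 9t+4$ and, in each case, propose an explicit family of multiplicity vectors depending linearly on $t$ (keeping one or two coordinates as the large bin that realizes $\max_i m_i$ and distributing the remainder among the others), then verify (i) that $\max_i m_i$ takes the prescribed value, so that the distance is exactly $\left\lfloor \frac{3n}{8}\right\rfloor$, handling the floor by reducing $n$ modulo the relevant modulus, and (ii) that the residues $m_i \bmod 3$ force $\sum_{i<j} m_i m_j \not\equiv 0$. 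By Massey's Theorem the second check \emph{is} the $LCD$ certificate.

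The main obstacle is step (ii) for the extremal vector. Matching the distance exactly pins $\max_i m_i$ to a single value and thereby constrains the remaining coordinates tightly, and for the most balanced admissible vectors the Gram determinant tends to vanish modulo $3$; for instance the perfectly repeated simplex pattern $(t,t,t,t)$ always yields $\sum_{i<j} m_i m_j = 6t^2 \equiv 0 \pmod 3$, which is \emph{not} $LCD$. The residue hypothesis $n \equiv 3,4 \pmod 9$ is precisely what guarantees that, among the multiplicity vectors realizing the target maximum, at least one satisfies $\sum_{i<j} m_i m_j \not\equiv 0 \pmod 3$. Establishing that this congruence can always be met under the stated hypothesis --- and organizing it as a short finite check on $n \bmod 9$ combined with the period-$3$ behaviour of the $m_i$ --- is the real work; I would finish by verifying the smallest cases ($n=3,4$) by hand as a sanity check on the general family.
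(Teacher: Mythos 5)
Your structural framework is correct and genuinely different from the paper's argument. The paper's proof is purely exhibitory: for each residue class it writes down one explicit block generator matrix and checks the minimum weight and $GG^{T}$ by hand; in your language these are column-multiplicity vectors concentrated on two of the four lines, namely $(m_1,m_2)=(6m+2,\,3m+1)$ for $n=9m+3$ and $(6m+3,\,3m+1)$ for $n=9m+4$. Your two formulas, $d=n-\max_i m_i$ (for matrices with no zero column) and $\det(GG^{T})\equiv\sum_{i<j}m_im_j\pmod 3$, both check out, and they subsume all such verifications at once. However, what you have written is a plan, not a proof: the entire content of the theorem is the existence of suitable multiplicity vectors, and you never exhibit them. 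You yourself label that step ``the real work'' and defer it, together with the floor identity $\max_i m_i=n-\lfloor 3n/8\rfloor$ and the congruence check, to a finite verification that is never carried out.

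The gap is genuine, because the deferred check fails as you have set it up. For $n=4$ (the case $t=0$ of $n\equiv 4\pmod 9$) your constraints force $\max_i m_i=4-\lfloor 12/8\rfloor=3$, so the only admissible vector up to permutation is $(3,1,0,0)$, and $\sum_{i<j}m_im_j=3\equiv 0\pmod 3$: no LCD code of that shape exists, and a code of distance exactly $1$ requires a zero column, which your normalization $\sum_i m_i=n$ forbids. Relaxing to $d\geq\lfloor 3n/8\rfloor$ does not rescue the theorem but refutes it: $(2,2,0,0)$ has $\sum_{i<j}m_im_j=4\not\equiv 0$, hence gives a $[4,2,2]_3$ LCD code (generator rows $(1,1,0,0)$ and $(0,0,1,1)$), whose distance exceeds the Remark's bound $\lfloor 12/8\rfloor=1$. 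So the upper bound you import from the Remark is itself false --- the correct $q$-ary Plotkin-type bound is $\lfloor nq^{k-1}(q-1)/(q^{k}-1)\rfloor$, and the factor $q-1$ is missing --- and with it the claimed equality at $n=4$; no completion of your plan (or of any plan) can prove the statement as written. Incidentally, your determinant formula also detects that the paper's own proof breaks down: in its second case $m_1=6m+3\equiv 0\pmod 3$, so $\det(GG^{T})\equiv(6m+3)(3m+1)\equiv 0$ and that matrix never generates an LCD code (the paper's $GG^{T}=\mathrm{diag}(2,2)$ miscomputes the entry $15m+9\equiv 0$), while the asserted identity $3m+1=\lfloor 3n/8\rfloor$ holds only for $m\leq 2$ in the first case and $m\leq 1$ in the second. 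Your framework, carried to completion, is the right tool for determining $LCD[n,2]_3$; but what it would establish differs from the stated theorem.
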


\begin{proof}
Our aim is to show the existence of $LCD$ codes with minimum distance achieving the bound in above remark.
\begin{enumerate}
\item Let $n \equiv 3(\text{mod}\hspace{0.1in}9)$, i.e. $n=9m+3$, for some $m\in \mathbb{Z}_{+}$. Consider the linear code with the following generator matrix.\\

$$G=
\left[\begin{array}{c|c|c}
1\ldots 1 & 2\ldots 2 & 0\ldots 0\\
\underbrace{0\ldots 0}_{3m} & \underbrace{0\ldots 0}_{3m+2} & \underbrace{2\ldots 2}_{3m+1}
\end{array}\right].
$$

This code has minimum weight $3m+1=\left \lfloor \frac{3(9m+3)}{8} \right \rfloor $ and 
$GG^{T}=\begin{bmatrix}
1 & 0\\
0 & 2
\end{bmatrix}$. Hence $det(GG^{T})=2\not\equiv 0 (\text{mod}\hspace{0.1in}3)$ and therefore this matrix is invertible. By Theorem \ref{Massey's Theorem} above, this code is an $LCD$ code.

\item Let $n \equiv 4(\text{mod}\hspace{0.1in}9)$, i.e. $n=9m+4$, for some $m\in \mathbb{Z}_{+}$. Consider the linear code with the following generator matrix.\\
$$G=
\left[\begin{array}{c|c|c}
1\ldots 1 & 2\ldots 2 & 0\ldots 0\\
\underbrace{0\ldots 0}_{3m+1} & \underbrace{0\ldots 0}_{3m+2} & \underbrace{2\ldots 2}_{3m+1}
\end{array}\right].
$$

This code has minimum weight $3m+1=\left \lfloor \frac{3(9m+4)}{8} \right \rfloor $ and 
$GG^{T}=\begin{bmatrix}
2 & 0\\
0 & 2
\end{bmatrix}$. Hence $det(GG^{T})=4\not\equiv 0 (\text{mod}\hspace{0.1in}3)$ and therefore this matrix is invertible. By Theorem \ref{Massey's Theorem} above, this code is an $LCD$ code.
\end{enumerate}
\end{proof}

Now we will give one construction of ternary $LCD$ codes from primary constructions of linear codes. As far as we know, this construction have not yet been studied in the literature of $LCD$ codes.

\begin{defn}$($\cite{9}$)$.
Let $q$ be odd. Let $C_i$ be an $[n,k_i,d_i]$ linear code over $GF(q)$, for $i=1, 2$. Define $C_1\between C_2:= \{(c_1+c_2, c_1-c_2): c_1\in C_1, c_2\in C_2\}$. Then $C_1\between C_2$ is a linear code over $GF(q)$. This code is $[2n, k_1+k_2]$-linear code over $GF(q)$.
\end{defn}

\begin{rem}
If $G_1$ and $G_2$ is generator matrix of $C_1$ and $C_2$ respectively, then generator matrix $G$ of $C_1\between C_2$ is given by 
$G=\begin{bmatrix}
G_1 & G_1\\
G_2 & -G_2
\end{bmatrix}$. 
\end{rem}

\begin{thm}
Let $C_i$ be $[n, k_i]$ $LCD$ codes over $GF(3)$, for $i=1, 2$. Then $C_1\between C_2$ is also a $LCD$ code over $GF(3)$.
\end{thm}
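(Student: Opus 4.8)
The plan is to apply Massey's Theorem (Theorem~\ref{Massey's Theorem}): since $C_1 \between C_2$ is generated by $G = \begin{bmatrix} G_1 & G_1 \\ G_2 & -G_2 \end{bmatrix}$, it suffices to show that $GG^{T}$ is invertible over $GF(3)$, given that each $C_i$ is LCD and hence each $G_iG_i^{T}$ is invertible by the same theorem. So I would begin by reading off the hypothesis in the useful form: $\det(G_1G_1^{T}) \neq 0$ and $\det(G_2G_2^{T}) \neq 0$ in $GF(3)$.

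The key computation is to expand $GG^{T}$ in block form. Writing $G^{T} = \begin{bmatrix} G_1^{T} & G_2^{T} \\ G_1^{T} & -G_2^{T} \end{bmatrix}$, the product is
\begin{equation*}
GG^{T} = \begin{bmatrix} G_1G_1^{T} + G_1G_1^{T} & G_1G_2^{T} - G_1G_2^{T} \\ G_2G_1^{T} - G_2G_1^{T} & G_2G_2^{T} + G_2G_2^{T} \end{bmatrix} = \begin{bmatrix} 2\,G_1G_1^{T} & 0 \\ 0 & 2\,G_2G_2^{T} \end{bmatrix}.
\end{equation*}
The crucial feature is that the off-diagonal blocks cancel identically, leaving a block-diagonal matrix. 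This is exactly where the sign in the $C_1 \between C_2$ construction (and the oddness of $q$) earns its keep.

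With the block-diagonal form in hand, invertibility follows immediately: $\det(GG^{T}) = \det(2\,G_1G_1^{T})\cdot \det(2\,G_2G_2^{T}) = 2^{k_1}\det(G_1G_1^{T}) \cdot 2^{k_2}\det(G_2G_2^{T})$, a product of nonzero scalars in $GF(3)$ (note $2$ is a unit in $GF(3)$, so the factor $2^{k_1+k_2}$ never vanishes). Hence $\det(GG^{T}) \neq 0$, and by Theorem~\ref{Massey's Theorem} the code $C_1 \between C_2$ is LCD.

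I do not expect a genuine obstacle here; the proof is essentially the block multiplication plus one invocation of Massey's Theorem. The only point requiring mild care is the bookkeeping that the two off-diagonal blocks are honestly equal (so that their difference is the zero matrix and not merely something of low rank), and the observation that $2$ is invertible in $GF(3)$ so that the scalar factors $2^{k_i}$ do not destroy invertibility. It is worth remarking that the argument goes through verbatim for any odd $q$, which matches the setting in which $C_1 \between C_2$ was defined.
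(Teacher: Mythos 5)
Your proof is correct and follows essentially the same route as the paper's: both expand $GG^{T}$ in block form to obtain $\begin{bmatrix} 2G_1G_1^{T} & 0 \\ 0 & 2G_2G_2^{T} \end{bmatrix}$, compute $\det(GG^{T}) = 2^{k_1+k_2}\det(G_1G_1^{T})\det(G_2G_2^{T})$, and invoke Massey's Theorem, the only cosmetic difference being that the paper phrases the nonvanishing via Euclid's lemma ($3 \nmid 2^{k_1+k_2}$) while you note directly that $2$ is a unit in $GF(3)$.
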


\begin{proof}
It is given that $C_1$ and $C_2$ both are $LCD$ codes over $GF(3)$. Suppose $G_1$ is generator matrix of $C_1$ and $G_2$ is generator matrix of $C_2$. Therefore by Theorem $2.3$ above, we have $det(G_1G_1^T)\not\equiv 0 (\text{mod}\hspace{0.1in}3)$ and $det(G_2G_2^T)\not\equiv 0 (\text{mod}\hspace{0.1in}3)$. Therefore, we have $GG^T=\begin{bmatrix}
G_1 & G_1\\
G_2 & -G_2

\end{bmatrix}\begin{bmatrix}
G_1^T & G_2^T\\
G_1^T & -G_2^T

\end{bmatrix}$. As a result of it, we get  $GG^T=\begin{bmatrix}
2G_1G_1^T & 0\\
0 & 2G_2G_2^T

\end{bmatrix}$. Now it remains to show that matrix $GG^T$ is invertible. Here $det(GG^T)=det(2G_1G_1^T). det(2G_2G_2^T)= 2^{k_1}det(G_1G_1^T). 2^{k_2} det(G_2G_2^T)= 2^{k_1+k_2}. det(G_1G_1^T). det(G_2G_2^T)$. In this expression both the terms at the end are not divisible by $3$ and $3 \nmid 2^{k_1+k_2}$. Therefore by Euclid's lemma, we get $3 \nmid 2^{k_1+k_2}. det(G_1G_1^T). det(G_2G_2^T)$ and consequently $C_1\between C_2$ is ternary $LCD$ code.
\end{proof}

\begin{lem}
For $n$ and $k$ integers greater than $0$, $LCD[n+1,k]_3\geq LCD[n,k]_3$.
\end{lem}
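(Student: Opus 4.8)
The plan is to establish the monotonicity by taking a code that achieves $LCD[n,k]_3$ and extending it by a single coordinate in a way that simultaneously preserves the minimum distance and the LCD property, after which Massey's theorem (Theorem \ref{Massey's Theorem}) certifies that the extended code is LCD. Thus the whole argument reduces to exhibiting one $[n+1,k,d]$ ternary LCD code, where $d=LCD[n,k]_3$.

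First I would set $d=LCD[n,k]_3$ and fix an $[n,k,d]$ ternary LCD code $C$ with generator matrix $G$, a $k\times n$ matrix over $GF(3)$. Since $C$ is LCD, Theorem \ref{Massey's Theorem} tells us that $GG^{T}$ is invertible. I would then append a column $c\in(GF(3))^k$ to form the $k\times(n+1)$ matrix $G'=[G\mid c]$, and let $C'$ be the code it generates. The codewords of $C'$ are exactly the vectors $(uG,\,uc)$ for $u\in(GF(3))^k$, so appending a coordinate cannot lower any codeword's weight; hence $C'$ has length $n+1$, dimension $k$ (the rank is still $k$, as $G$ already has rank $k$), and $d(C')\geq d$. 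The decisive step is the block computation
$$G'G'^{T}=[G\mid c]\begin{bmatrix} G^{T} \\ c^{T} \end{bmatrix}=GG^{T}+cc^{T},$$
from which the choice $c=0$ gives $G'G'^{T}=GG^{T}$, invertible by hypothesis, so $C'$ is LCD by Theorem \ref{Massey's Theorem}; moreover its minimum distance is exactly $d$, since the appended coordinate vanishes on every codeword. This produces an $[n+1,k,d]$ ternary LCD code and yields $LCD[n+1,k]_3\geq d=LCD[n,k]_3$.

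I do not anticipate a genuine obstacle here; the only subtle point is that extending a code can in principle destroy the LCD property, because $GG^{T}$ is sensitive to the new column. The identity $G'G'^{T}=GG^{T}+cc^{T}$ isolates precisely this dependence, and the zero-column extension neutralizes it while leaving the weight distribution untouched. If one instead wanted a strict gain in distance, the harder task would be to pick $c$ so that $GG^{T}+cc^{T}$ remains invertible \emph{and} every minimum-weight codeword acquires a nonzero last coordinate; but for the stated inequality the trivial zero-column extension already suffices.
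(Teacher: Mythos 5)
Your proof is correct and coincides with the paper's intended argument: the paper does not spell out a proof at all but merely defers to Lemma 3.1 of \cite{3}, and the argument there is precisely your zero-column extension, taking $G'=[G\mid 0]$ so that $G'G'^{T}=GG^{T}$ remains invertible, whence Theorem \ref{Massey's Theorem} gives an $[n+1,k,d]$ ternary LCD code with $d=LCD[n,k]_3$. Your write-up is in fact more self-contained than the paper's, and the identity $G'G'^{T}=GG^{T}+cc^{T}$ correctly isolates the only point where the LCD property could fail.
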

\begin{proof}
Proof follows on similar lines as that of Lemma $3.1$ from \cite{3}.

\end{proof}

\begin{thm} $(i)$ If $n$ is an integer such that $3\nmid n$, then $LCD[n,1]_3=n$ and $LCD[n,n-1]_3=2$.\\
$(ii)$ If $n$ is an integer such that $3 \nmid (n-1)$, then $LCD[n,1]_3=n-1$ and $LCD[n,n-1]_3=2$.

\end{thm}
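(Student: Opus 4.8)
The plan is to reduce both quantities to a single fact about one-dimensional ternary codes. Over $GF(3)$ every nonzero square equals $1$, so for any row vector $g\in(GF(3))^{n}$ we have $gg^{T}=\sum_{i=1}^{n}g_{i}^{2}\equiv\mathrm{wt}(g)\pmod 3$. Since the one-dimensional code $\langle g\rangle=\{0,g,2g\}$ has minimum distance exactly $\mathrm{wt}(g)$, Massey's theorem (Theorem~\ref{Massey's Theorem}) tells us that $\langle g\rangle$ is an $LCD$ code if and only if $3\nmid\mathrm{wt}(g)$. I would isolate this as a preliminary observation, since it controls everything that follows.

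For $LCD[n,1]_{3}$: an arbitrary $[n,1]_{3}$ code is $\langle g\rangle$ for some nonzero $g$, its distance is $\mathrm{wt}(g)\le n$, and it is $LCD$ precisely when $3\nmid\mathrm{wt}(g)$. Hence $LCD[n,1]_{3}$ equals the largest integer $w\le n$ with $3\nmid w$. If $3\nmid n$ I take $g=(1,\dots,1)$, obtaining a genuine $[n,1,n]_{3}$ $LCD$ code and the value $n$; if $3\mid n$ the largest admissible weight is $n-1$ (note $n-1\equiv2\pmod 3$), realised by any weight-$(n-1)$ vector, giving the value $n-1$. The explicit vectors furnish the lower bounds and the weight-mod-$3$ observation furnishes the matching upper bounds, so no appeal to the earlier Remark is needed here.

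For $LCD[n,n-1]_{3}$: I would dualise. A code is $LCD$ iff its dual is, so an $[n,n-1]_{3}$ code $C$ is $LCD$ iff $C^{\perp}=\langle h\rangle$ is, i.e.\ iff $3\nmid\mathrm{wt}(h)$. The Singleton bound (or directly, the existence of a nonzero codeword supported on any two coordinates) gives $d(C)\le2$, and $d(C)\ge2$ iff $C$ has no weight-one word; since a scalar multiple of $e_{i}$ lies in $C$ iff $h_{i}=0$, this is equivalent to $\mathrm{wt}(h)=n$. Thus a distance-$2$ $LCD$ code requires $\mathrm{wt}(h)=n$ together with $3\nmid n$: when $3\nmid n$ the choice $h=(1,\dots,1)$ yields $LCD[n,n-1]_{3}=2$, whereas when $3\mid n$ every full-weight $h$ fails Massey's test, so no distance-$2$ $LCD$ code exists and a weight-$(n-1)$ dual generator shows the value drops to $1$.

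The two constructions and the Singleton/weight bookkeeping are routine; the one place demanding care is aligning the arithmetic with the stated hypotheses. Both $LCD[n,1]_{3}$ and $LCD[n,n-1]_{3}$ are governed by the \emph{same} dichotomy, namely whether $3\mid n$: each takes its larger value ($n$ and $2$ respectively) exactly when $3\nmid n$, and each drops by one when $3\mid n$. The main obstacle I anticipate is therefore purely organisational---checking that the congruence conditions appearing in $(i)$ and $(ii)$ genuinely correspond to $3\nmid n$ versus $3\mid n$, in particular that the two cases do not overlap in a way that would force conflicting values, and confirming tightness through the weight-mod-$3$ lemma rather than through any cruder size bound.
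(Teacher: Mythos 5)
Your proposal is mathematically correct, and in fact it does more than the paper: it exposes a genuine error in part $(ii)$ of the statement. On the parts that are true, your skeleton is the same as the paper's (explicit one-dimensional generator matrices, Massey's Theorem \ref{Massey's Theorem}, and dualization for $k=n-1$), but your preliminary observation $gg^{T}\equiv \mathrm{wt}(g)\ (\text{mod}\ 3)$, hence $\langle g\rangle$ is $LCD$ iff $3\nmid \mathrm{wt}(g)$, is the right invariant and closes a gap in the paper's own argument: for the upper bound $LCD[n,1]_3\le n-1$ when $3\mid n$, the paper only rules out the repetition code via its two generator matrices $(1,\ldots,1)$ and $(2,\ldots,2)$, but an $[n,1,n]_3$ code need not be the repetition code --- any full-weight vector such as $(1,2,1,\ldots,1)$ generates one --- and your lemma disposes of all of these at once since every full-weight $g$ has $gg^{T}\equiv n\equiv 0$. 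You are also right to worry about the hypotheses: as literally stated, $(i)$ and $(ii)$ overlap when $n\equiv 2\ (\text{mod}\ 3)$ (both $3\nmid n$ and $3\nmid (n-1)$ hold) and then assert the conflicting values $n$ and $n-1$ for $LCD[n,1]_3$; the paper's own proof of $(ii)$ begins ``If $3\mid n$,'' so the intended hypothesis there is $3\mid n$, under which your dichotomy reproduces exactly the paper's values $n$ and $n-1$ for $k=1$.

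For $k=n-1$, however, your conclusion deviates from the statement because the statement is false when $3\mid n$, and your argument shows why. The paper claims the dual $\widetilde{C}^{\perp}$ of the code generated by $(0,1,\ldots,1)$ has minimum weight $2$, exhibiting $(0,\ldots,0,1,2)$; but the defining condition $c_2+\cdots+c_n\equiv 0\ (\text{mod}\ 3)$ leaves $c_1$ unconstrained, so $(1,0,\ldots,0)\in\widetilde{C}^{\perp}$ and $d(\widetilde{C}^{\perp})=1$. Your bookkeeping is the correct version: an $[n,n-1]_3$ code $C$ has $d(C)\ge 2$ iff no nonzero multiple of any $e_i$ lies in $C$, iff the dual generator $h$ has full weight, in which case $hh^{T}\equiv n\ (\text{mod}\ 3)$; hence a distance-$2$ $LCD$ code of dimension $n-1$ exists iff $3\nmid n$, and $LCD[n,n-1]_3=1$ when $3\mid n$. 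The case $n=3$ confirms this concretely: every full-weight $h\in (GF(3))^3$ satisfies $hh^{T}\equiv 0$, indeed $h\in\langle h\rangle\cap\langle h\rangle^{\perp}$, so no $[3,2,2]_3$ $LCD$ code exists and $LCD[3,2]_3=1$, contradicting the value $2$ asserted by $(ii)$. So your proposal proves a corrected theorem --- $LCD[n,1]_3=n$ and $LCD[n,n-1]_3=2$ when $3\nmid n$; $LCD[n,1]_3=n-1$ and $LCD[n,n-1]_3=1$ when $3\mid n$ --- and no proof of $(ii)$ as stated can exist.
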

\begin{proof}
$(i)$ Consider ternary repetition code $C=\{\underbrace{0\ldots 0}_{n}, \underbrace{1\ldots 1}_{n}, \underbrace{2\ldots 2}_{n}\}$. This code is $[n,1,n]_3$ code, which have largest possible minimum distance. There are two choices for its generator matrices say $G_1$ and $G_2$. Suppose $G_1=\begin{bmatrix}1 & 1 & \ldots & 1 \end{bmatrix}$ and $G_2=\begin{bmatrix}2 & 2 & \ldots & 2 \end{bmatrix}$ respectively. Then $det(G_1G_1^T)=n$ and $det(G_2G_2^T)=2^2n$. Since, $3\nmid n$, we have $det(G_1G_1^T) \not\equiv 0 (\text{mod}\hspace{0.1in}3)$ and $det(G_2G_2^T) \not\equiv 0 (\text{mod}\hspace{0.1in}3)$. Hence by Theorem $2.3$ above, rows of these generator matrices will generate $LCD$ codes. Thus we get, $LCD[n,1]_3=n$. Also, we know that if $C$ is $LCD$ then so its dual $C^\perp$. In this case dual code is $LCD$ code having $dimension$ as $n-1$. If $(c_1,c_2,\ldots,c_n)\in C^\perp$, then $c_1+\cdots+c_n \equiv 0 (\text{mod}\hspace{0.1in}3)$ and hence we will have a choice of codeword $(1,2,0,\ldots,0)$, whose 
weight is minimum. Therefore,
 we get $LCD[n,n-1]_3=2.$\\
$(ii)$ If $3\mid n$, then ternary repetition code $C$ of length $n$ having generator matrix 
$G=\begin{bmatrix}
1\ldots 1 
\end{bmatrix}$
will not be a $LCD$ code, since in this case, $det(GG^T)=n$. So we must try for another ternary code $\displaystyle{\widetilde{C}}$ having a basis as $\mathcal{B}=\{0\underbrace{1\ldots1}_{n-1}\}$. Then we get $\displaystyle{\widetilde{C}}=\{0\underbrace{0\ldots 0}_{n-1}, 0\underbrace{1\ldots1}_{n-1}, 0\underbrace{2\ldots2}_{n-1} \}$. Note that, this code $\displaystyle{\widetilde{C}}$ have maximum possible minimum distance amongst all ternary linear codes, besides ternary repetition code. In present case, there are two choices for its generator matrices, say $G_1=
\begin{bmatrix}
0 & \underbrace{1\ldots1}_{n-1}
\end{bmatrix}$ 
and 
$G_2=\begin{bmatrix}
0 & \underbrace{2\ldots2}_{n-1}
\end{bmatrix}$. 
As a result of which, we get $G_1G_1^T=n-1$ and $G_2G_2^T=2^2.(n-1)$. Consequently, $det(G_1G_1^T)=n-1$ and $det(G_2G_2^T)=2^2.(n-1)$. Hence by Theorem $2.3$ above, $G_1$ and $G_2$ will generate ternary $LCD$ code $\displaystyle{\widetilde{C}}$ if and only if $3\nmid (n-1)$. 

Further, we know that if $\displaystyle{\widetilde{C}}$ is $LCD$ then so its dual $\displaystyle{{\widetilde{C}}^\perp}$. In this case, dual code is $LCD$ code having $dimension$ as $n-1$. If $(c_1,c_2,\ldots,c_n)\in \displaystyle{\widetilde{C}}^\perp$, then $c_2+\cdots+c_n \equiv 0 (\text{mod}\hspace{0.1in}3)$ and hence we will have a choice of codeword $(0,0,\ldots,1,2)$ whose weight is minimum. Therefore, we get $LCD[n,n-1]_3=2.$\\
\end{proof}
\section{Conclusion}
In this paper, We have given new construction of ternary $LCD$ codes, by using some primary constructions. Also, we have discussed some cases where the bound on $LCD[n,k]_3$ is attained.  In a future study, we will generalize this result for any $q$.



\end{document}